\documentclass[11pt]{article}

\usepackage[a4paper,margin=2.5cm]{geometry}
\usepackage{amsmath,amssymb,amsthm}
\usepackage{tikz}
\usepackage{microtype}

\newtheorem{proposition}{Proposition}

\title{Local Complex Dependence and Separability in Madelung Hydrodynamics}
\author{Lorenzo Pirovano \\ \emph{lorenzopirovano.res@gmail.com}}
\date{}

\begin{document}
	\maketitle
	
	\begin{abstract}
		For a many-particle pure state in a fixed position representation, we consider
		the mixed cross-particle derivatives of the logarithm of the wave function.
		Their real part is one half of the Holland--Wang local dependence function of
		the configuration density, while their imaginary part is the cross-Jacobian of
		the Madelung velocity field.  On a node-free product region, vanishing of all
		cross blocks throughout the region is equivalent to local multiplicative
		separability.  Under Schr\"odinger evolution with a real scalar potential, the
		initial growth of a cross block from a separable state is sourced by the
		corresponding mixed Hessian of the potential and is purely imaginary to first
		order in time.  The construction is a local separability and dependence
		diagnostic, rather than a basis-independent entanglement measure.
	\end{abstract}
	
	\section{Bipartite Madelung state}
	
	Fix a time and consider two distinguishable particles in one spatial dimension.
	Using the Madelung representation~\cite{Madelung1927},
	\begin{equation}
		\psi(x,y)=\sqrt{\rho(x,y)}\,e^{iS(x,y)/\hbar}.
	\end{equation}
	The phase is understood locally on regions where $\psi\neq0$.  The Madelung
	velocity components are
	\begin{equation}
		v_x=\frac{1}{m_x}\frac{\partial S}{\partial x},
		\qquad
		v_y=\frac{1}{m_y}\frac{\partial S}{\partial y}.
	\end{equation}
	A product state has the form
	\begin{equation}
		\psi(x,y)=\psi_x(x)\psi_y(y).
	\end{equation}
	On a node-free product region, multiplicative separability is equivalent to
	additive separability of a consistent local branch of $\log\psi$.
	
	The purpose of this note is not to define a new global entanglement measure,
	but to combine three structures that are usually treated separately: the
	Holland--Wang mixed logarithmic derivative of the configuration density, the
	mixed phase Hessian controlling cross-particle Madelung velocities, and the
	rank-one condition for multiplicative separability.  Packaging them as the
	complex off-diagonal Hessian of $\log\psi$ gives a pointwise diagnostic that
	extends directly to multipartite cluster separability.
	
	\section{Local complex dependence}
	
	Define, wherever $\psi\neq0$,
	\begin{equation}
		\mathcal K_{xy}(x,y)
		=\frac{\partial^2}{\partial x\,\partial y}\log\psi(x,y).
	\end{equation}
	Equivalently,
	\begin{equation}
		\mathcal K_{xy}
		=\frac{\psi\,\partial_x\partial_y\psi
			-(\partial_x\psi)(\partial_y\psi)}{\psi^2},
	\end{equation}
	which makes explicit that the logarithmic derivative is independent of the
	choice of local logarithm branch.  The Madelung decomposition gives
	\begin{equation}
		\mathcal K_{xy}
		=\frac12\frac{\partial^2\log\rho}{\partial x\,\partial y}
		+\frac{i}{\hbar}\frac{\partial^2S}{\partial x\,\partial y}.
	\end{equation}
	Hence
	\begin{equation}
		\operatorname{Re}\mathcal K_{xy}
		=\frac12 H_{\rm HW}(x,y),
		\qquad
		H_{\rm HW}=\frac{\partial^2\log\rho}{\partial x\,\partial y},
	\end{equation}
	where $H_{\rm HW}$ is the local dependence function introduced in statistics by Holland and	Wang~\cite{HollandWang1987} and further studied by Jones~\cite{Jones1996}.
	The imaginary part has the direct hydrodynamic interpretation
	\begin{equation}
		\frac{\partial v_x}{\partial y}
		=\frac{\hbar}{m_x}\operatorname{Im}\mathcal K_{xy},
		\qquad
		\frac{\partial v_y}{\partial x}
		=\frac{\hbar}{m_y}\operatorname{Im}\mathcal K_{xy}.
	\end{equation}
	It therefore gives the pointwise cross-response of one Madelung velocity sector
	to the coordinate of the other.
	
	The same field can be written in terms of the complex quantum momentum
	\begin{equation}
		p_x^{\rm c}=-i\hbar\,\partial_x\log\psi,
	\end{equation}
	as
	\begin{equation}
		\partial_y p_x^{\rm c}=-i\hbar\,\mathcal K_{xy}.
	\end{equation}
	Complex logarithmic momentum and velocity fields are standard objects in
	complex quantum hydrodynamics~\cite{BonillaSchuch2021}.  The point here is the
	cross-particle block and its simultaneous statistical and hydrodynamic
	interpretation.
	
	Related Bohmian work has introduced hierarchies of ``entanglement fields''
	built from conditional wave functions and derivatives of the configuration-space
	wave function in attempts to formulate pilot-wave dynamics in physical
	space~\cite{Norsen2010,NorsenMarianOriols2015}.  These are distinct from
	$\mathcal K_{xy}$: the latter is defined directly on configuration space as a
	mixed logarithmic derivative and is used here as a local dependence and
	separability diagnostic, rather than as a hierarchy of conditional dynamical
	fields.
	
	\begin{proposition}
		Let $U$ and $V$ be open, connected intervals and let
		$\psi\in C^2(U\times V)$ be nowhere zero.  Then
		\begin{equation}
			\mathcal K_{xy}=0 \quad\text{throughout }U\times V
		\end{equation}
		if and only if $\psi(x,y)=a(x)b(y)$ on $U\times V$.
	\end{proposition}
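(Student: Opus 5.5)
The easy direction comes first. If $\psi(x,y)=a(x)b(y)$ on $U\times V$ with $\psi$ nowhere zero, then $a$ and $b$ are nowhere zero. Fixing $y_0\in V$ gives $a(x)=\psi(x,y_0)/b(y_0)$, so $a\in C^2(U)$, and likewise $b\in C^2(V)$. Inserting this into the branch-free quotient formula
\[
\mathcal K_{xy}=\frac{\psi\,\partial_x\partial_y\psi-(\partial_x\psi)(\partial_y\psi)}{\psi^2}
\]
gives numerator $ab\,a'b'-a'b\,ab'=0$. So $\mathcal K_{xy}=0$ throughout $U\times V$.

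For the converse I will avoid choosing a global branch of $\log\psi$ and work with the logarithmic derivative $u:=\partial_x\psi/\psi$. This is a well-defined $C^1$ function on $U\times V$ because $\psi\in C^2$ and $\psi\neq0$. A direct computation gives
\[
\partial_y u=\frac{\psi\,\partial_x\partial_y\psi-\partial_x\psi\,\partial_y\psi}{\psi^2}=\mathcal K_{xy}.
\]
The hypothesis then says $\partial_y u\equiv 0$. Since $V$ is a connected interval, the mean value theorem applied to real and imaginary parts along each horizontal slice shows that $u(x,y)=\alpha(x)$ is independent of $y$. Here $\alpha$ is continuous on $U$.

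Next fix $y_0\in V$ and set $a(x):=\psi(x,y_0)$, which is nowhere zero. For each fixed $y$, define $q(x):=\psi(x,y)/a(x)$. Then
\[
q'/q=\partial_x\psi/\psi-a'/a=\alpha(x)-\alpha(x)=0,
\]
so $q'=0$ on the connected interval $U$. Hence $q$ is constant in $x$, and the constant equals $b(y):=\psi(x_0,y)/a(x_0)$ for any fixed $x_0\in U$. This gives $\psi(x,y)=a(x)b(y)$ on $U\times V$.

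The main point needing care is the mixed-partial step. $\mathcal K_{xy}$ is defined as $\partial_x\partial_y\log\psi$, but I use it in the form $\partial_y(\partial_x\psi/\psi)$. The quotient formula stated in the paper is exactly $\partial_y(\partial_x\psi/\psi)$ computed with $\partial_y\partial_x\psi$. For $\psi\in C^2$ Schwarz's theorem gives $\partial_x\partial_y\psi=\partial_y\partial_x\psi$, so no branch or ordering issue arises. Connectedness of both intervals is used, once in each variable, to pass from vanishing derivatives to constancy. The product structure of $U\times V$ guarantees that every slice is a full interval.
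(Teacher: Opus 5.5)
Your proof is correct, but it takes a different route from the paper's.

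\textbf{The paper's route.} It uses the simple connectivity of $U\times V$ to choose a global $C^2$ branch $L$ of $\log\psi$. This rests on the standard lifting fact for $\exp:\mathbb C\to\mathbb C\setminus\{0\}$. It then reads $\mathcal K_{xy}=0$ as $\partial_y\partial_xL=0$, integrates additively to $L=f(x)+g(y)+c$, and exponentiates.

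\textbf{Your route.} You never choose a branch. You work only with the single-valued logarithmic derivative $\partial_x\psi/\psi$ and integrate multiplicatively through the ratio $\psi(x,y)/\psi(x,y_0)$. The only analytic inputs are Schwarz's theorem and the mean value theorem on interval slices.

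\textbf{What each buys.} The paper's version is shorter and matches its framing of separability as additivity of $\log\psi$. Yours is more elementary and gives two extras.
First, it ends with the explicit identity $\psi(x,y)\psi(x_0,y_0)=\psi(x,y_0)\psi(x_0,y)$, i.e.\ $\mathcal R_\psi(x,x_0;y,y_0)=1$. This ties the infinitesimal criterion directly to the paper's finite cross-ratio.
Second, it uses only connectedness, not simple connectivity. The same argument therefore proves the cluster-separability proposition on $\Omega=\Omega_A\times\Omega_B$, with $\Omega_A=\prod_{i\in A}U_i$, when the factors are merely connected. For this, replace $\partial_x\psi/\psi$ by $\nabla_{\mathbf X_A}\psi/\psi$, and replace the mean value theorem by constancy of functions with vanishing gradient on connected open sets. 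An example is three-dimensional single-particle regions encircling a nodal line, where a global branch of $\log\psi$ need not exist and the paper's argument does not apply.
For the one-dimensional statement at hand this extra generality is vacuous, since a product of intervals is simply connected anyway.
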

	
	\begin{proof}
		Since $U$ and $V$ are intervals, $U\times V$ is simply connected.  Because
		$\psi$ is nowhere zero there, it admits a consistent $C^2$ logarithm $L$ on
		$U\times V$.  The forward implication then follows from
		$\partial_y(\partial_x L)=0$: hence $\partial_xL$ depends only on $x$, and
		integration gives $L=f(x)+g(y)+c$.  Exponentiation gives the product form.
		The converse is immediate by differentiation.
	\end{proof}
	
	Thus a zero of $\mathcal K_{xy}$ at a single point is not by itself a
	separability statement: separability follows when the field vanishes
	throughout a node-free product region.
	
	\section{Finite separability ratio}
	
	The infinitesimal field has a simple finite counterpart.  Define, wherever the
	denominator is nonzero,
	\begin{equation}
		\mathcal R_\psi(x,x';y,y')=
		\frac{\psi(x,y)\psi(x',y')}
		{\psi(x,y')\psi(x',y)}.
	\end{equation}
	All one-particle multiplicative factors cancel.  Moreover,
	\begin{equation}
		\mathcal R_\psi=1
		\quad\Longleftrightarrow\quad
		\psi(x,y)\psi(x',y')-\psi(x,y')\psi(x',y)=0,
	\end{equation}
	so the finite condition is the usual rank-one condition written as a
	cross-ratio.  On a node-free product domain, $\mathcal R_\psi=1$ for all
	quadruples if and only if the state is multiplicatively separable.
	
	If the rectangle joining $(x,y)$ to $(x',y')$ lies in a node-free patch and a
	consistent branch of $\log\psi$ is chosen there, then
	\begin{equation}
		\log\mathcal R_\psi
		=\int_x^{x'}\!du\int_y^{y'}\!dv\;
		\frac{\partial^2\log\psi(u,v)}{\partial u\,\partial v}.
	\end{equation}
	Here the logarithm on the left denotes the value induced by the same chosen
	branch of $\log\psi$; without this convention, the identity is understood
	modulo $2\pi i$.
	Consequently,
	\begin{equation}
		\mathcal K_{xy}
		=\lim_{\delta x,\delta y\to0}
		\frac{\log\mathcal R_\psi(x,x+\delta x;y,y+\delta y)}
		{\delta x\,\delta y}.
	\end{equation}
	The modulus of $\mathcal R_\psi$ probes density dependence, while its argument
	probes the corresponding phase mismatch.
	
	\section{Minimal phase-coupled example}
	
	Consider
	\begin{equation}
		\psi(x,y)=\phi(x)\chi(y)e^{i\lambda xy}.
	\end{equation}
	Here $\lambda$ has the reciprocal units of $xy$, so that the phase is
	dimensionless.
	Its position density factorizes exactly,
	\begin{equation}
		|\psi(x,y)|^2=|\phi(x)|^2|\chi(y)|^2,
	\end{equation}
	but
	\begin{equation}
		\mathcal K_{xy}=i\lambda.
	\end{equation}
	Thus the instantaneous position density has no cross-dependence, whereas the
	velocity field is coupled:
	\begin{equation}
		\frac{\partial v_x}{\partial y}=\frac{\hbar\lambda}{m_x},
		\qquad
		\frac{\partial v_y}{\partial x}=\frac{\hbar\lambda}{m_y}.
	\end{equation}
	For finite-width factors the state is nonseparable when $\lambda\neq0$.  If
	$|\phi|^2$ and $|\chi|^2$ are centered Gaussians with variances
	$\sigma_x^2$ and $\sigma_y^2$, respectively, then
	\begin{equation}
		\operatorname{Tr}(\widehat\rho_x^2)
		=\frac{1}{\sqrt{1+4\lambda^2\sigma_x^2\sigma_y^2}}.
	\end{equation}
	This also shows directly that $\mathcal K$ is not a global entanglement
	measure: at fixed $\lambda$ the local field remains $i\lambda$, while changing
	the widths changes the reduced-state purity.
	
	A closely related Gaussian family displays the two parts of $\mathcal K$ at
	once.  Let
	\begin{equation}
		\psi_{\kappa,\lambda}(x,y)
		=\mathcal N\exp\!\left[
		-\frac{x^2}{4\sigma_x^2}
		-\frac{y^2}{4\sigma_y^2}
		+(\kappa+i\lambda)xy\right],
	\end{equation}
	where $\mathcal N$ normalizes the state and
	$|\kappa|<1/(2\sigma_x\sigma_y)$.  Then
	\begin{equation}
		\mathcal K_{xy}=\kappa+i\lambda.
	\end{equation}
	Thus $\kappa$ produces a uniform Holland--Wang density dependence, whereas
	$\lambda$ produces a uniform cross-response of the Madelung velocities.  Both
	parameters have the reciprocal units of $xy$.
	
	\section{Relation to configuration and phase entanglement}
	
	Zander and Plastino~\cite{ZanderPlastino2018} decompose the linear entropy of a
	pure bipartite state into configuration and phase indicators,
	\begin{equation}
		E_{\rm lin}=1-\operatorname{Tr}(\widehat\rho_x^2)=E_c+E_p.
	\end{equation}
	The former is associated with nonfactorizability of the configuration density,
	and the latter with nonadditivity of the phase.  These are global scalar
	quantities.  By contrast,
	\begin{equation}
		\operatorname{Re}\mathcal K_{xy}
		=\frac12\partial_x\partial_y\log\rho,
		\qquad
		\operatorname{Im}\mathcal K_{xy}
		=\frac1\hbar\partial_x\partial_yS
	\end{equation}
	retains the location and differential structure of the coupling.  In the
	phase-coupled example above the density contribution vanishes, while
	$\mathcal K_{xy}=i\lambda$ identifies a uniform phase coupling.
	
	There is no universal reconstruction of global entanglement from $\mathcal K$
	alone.  Indeed, for nonvanishing one-particle factors $a(x)$ and $b(y)$ chosen
	so that the filtered state remains normalizable,
	\begin{equation}
		\psi'(x,y)=C\,a(x)b(y)\psi(x,y)
	\end{equation}
	satisfies $\mathcal K'_{xy}=\mathcal K_{xy}$.  Pure-phase factors are local
	unitaries, but position-dependent moduli implement local nonunitary filtering
	and can change the Schmidt coefficients after normalization.  Thus the same
	local field may correspond to different global entanglement.
	
	\section{Many particles in three dimensions}
	
	For $N$ distinguishable spinless particles,
	\begin{equation}
		\psi(\mathbf x_1,\ldots,\mathbf x_N)
		=\sqrt{\rho}\,e^{iS/\hbar},
	\end{equation}
	define, for $i\neq j$,
	\begin{equation}
		\mathcal K_{ij}^{ab}
		=\frac{\partial^2\log\psi}
		{\partial x_i^a\,\partial x_j^b},
		\qquad a,b\in\{1,2,3\}.
	\end{equation}
	Each $\mathcal K_{ij}$ is a complex $3\times3$ cross block, with
	\begin{equation}
		\frac{\partial v_i^a}{\partial x_j^b}
		=\frac{\hbar}{m_i}\operatorname{Im}\mathcal K_{ij}^{ab}.
	\end{equation}
	The collection of these matrices forms the off-diagonal block structure of
	the complex Hessian of $\log\psi$ and therefore organizes the local many-body
	coupling by particle sectors.  Equivalently, if
	$p_{i,a}^{\rm c}=-i\hbar\,\partial_{x_i^a}\log\psi$, then
	\begin{equation}
		\partial_{x_j^b}p_{i,a}^{\rm c}
		=-i\hbar\,\mathcal K_{ij}^{ab}.
	\end{equation}
	The cross blocks also identify separable clusters, not only complete
	factorization into one-particle states.
	
	\begin{proposition}[Cluster separability]
		\label{prop:cluster-separability}
		Let $\Omega=U_1\times\cdots\times U_N$ be an open, simply connected, node-free
		product region, and let $A$ and $B$ be a nontrivial partition of
		$\{1,\ldots,N\}$.  Then
		\begin{equation}
			\mathcal K_{ij}=0
			\qquad\text{throughout }\Omega,
			\quad i\in A,\ j\in B,
		\end{equation}
		if and only if
		\begin{equation}
			\psi(\mathbf X_A,\mathbf X_B)
			=\psi_A(\mathbf X_A)\psi_B(\mathbf X_B)
		\end{equation}
		on $\Omega$.
	\end{proposition}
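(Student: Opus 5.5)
The plan is to mirror the proof of the bipartite Proposition, replacing the two one-dimensional intervals by the two cluster factors $\Omega_A=\prod_{i\in A}U_i$ and $\Omega_B=\prod_{j\in B}U_j$, so that $\Omega=\Omega_A\times\Omega_B$. Since $\Omega$ is simply connected and $\psi$ is nowhere zero and $C^2$ on it, I will fix a consistent $C^2$ branch $L$ of $\log\psi$ on $\Omega$. The converse direction is immediate: if $\psi=\psi_A\psi_B$ with nonvanishing factors, then locally $L=\log\psi_A(\mathbf X_A)+\log\psi_B(\mathbf X_B)$ up to a constant. Every mixed derivative $\partial_{x_i^a}\partial_{x_j^b}$ with $i\in A$, $j\in B$ then kills both terms, so $\mathcal K_{ij}=0$.

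For the forward direction I will use the hypothesis that $\partial_{x_j^b}\bigl(\partial_{x_i^a}L\bigr)=0$ on $\Omega$ for all $i\in A$, $j\in B$ and all $a,b$. This says that each component of $\nabla_{\mathbf X_A}L$ has vanishing gradient in $\mathbf X_B$. To conclude that it depends only on $\mathbf X_A$, I need $\{\mathbf X_A\}\times\Omega_B$ to be connected for each fixed $\mathbf X_A$, that is, $\Omega_B$ connected. Each $U_j$ is an open subset of $\mathbb R^3$, and a product of open sets is connected exactly when every nonempty factor is connected. Since $\Omega$ itself is connected (simply connected entails connected), every factor $U_j$ is connected, and hence so are $\Omega_A$ and $\Omega_B$. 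Therefore $\nabla_{\mathbf X_A}L(\mathbf X_A,\mathbf X_B)=G(\mathbf X_A)$ for some $C^1$ vector field $G$ on $\Omega_A$.

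Next I fix a base point $\mathbf Y_B\in\Omega_B$ and set $f(\mathbf X_A)=L(\mathbf X_A,\mathbf Y_B)$ and $h(\mathbf X_A,\mathbf X_B)=L-f(\mathbf X_A)$. Then $\nabla_{\mathbf X_A}h=G-G=0$, and $\Omega_A$ is connected, so $h$ depends only on $\mathbf X_B$; call it $g(\mathbf X_B)$. This gives $L=f(\mathbf X_A)+g(\mathbf X_B)$, and exponentiating yields $\psi=\psi_A\psi_B$ with $\psi_A=e^{f}$ and $\psi_B=e^{g}$, both $C^2$ and nowhere zero. Note that $\nabla_{\mathbf X_A}h=0$ is read as $\mathbf X_A$ varying over the connected set $\Omega_A$ at each fixed $\mathbf X_B$.

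The only step I expect to need care with is the connectivity bookkeeping: deriving connectedness of the cluster factors from the hypothesis on $\Omega$, and using simple connectivity only to obtain a global branch of the logarithm. Everything else is the same integration argument as in the bipartite Proposition.
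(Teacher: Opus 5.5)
Your proposal is correct and follows the same route as the paper. Both arguments choose a global branch $L$ of $\log\psi$ via simple connectivity, use the vanishing cross blocks to make $\nabla_{\mathbf X_A}L$ independent of $\mathbf X_B$, integrate to $L=f(\mathbf X_A)+g(\mathbf X_B)$ and exponentiate, with the converse by differentiation; your check that connectedness of $\Omega$ forces $\Omega_A$ and $\Omega_B$ to be connected just makes explicit a step the paper leaves implicit.
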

	
	\begin{proof}
		Since $\psi$ is nonzero on the simply connected region, a consistent logarithm
		may be chosen.  Vanishing of all mixed derivatives across the $A|B$ partition
		implies that the $\mathbf X_A$-gradient of $\log\psi$ is independent of
		$\mathbf X_B$, and conversely.  Hence
		$\log\psi=f_A(\mathbf X_A)+f_B(\mathbf X_B)+c$, which gives the stated product
		form.  The reverse implication follows immediately by differentiation.
	\end{proof}
	
	In particular, define a regional coupling graph $G_\Omega$ whose vertices are
	the particle labels and in which $i$ and $j$ are joined when
	$\mathcal K_{ij}$ does not vanish identically on $\Omega$.  If its connected
	components are $C_1,\ldots,C_r$, the proposition gives
	\begin{equation}
		\psi=\prod_{\alpha=1}^r \psi_{C_\alpha}(\mathbf X_{C_\alpha})
		\qquad\text{on }\Omega.
	\end{equation}
	
	\begin{figure}[t]
		\centering
		\begin{tikzpicture}[
			particle/.style={circle,draw,minimum size=7mm,inner sep=0pt},
			coupling/.style={thick}
			]
			\node[particle] (p1) at (-2.7,0.65) {$1$};
			\node[particle] (p2) at (-3.25,-0.55) {$2$};
			\node[particle] (p3) at (-2.05,-0.55) {$3$};
			\draw[coupling] (p1)--(p2);
			\draw[coupling] (p1)--(p3);
			\draw[coupling] (p2)--(p3);
			\draw[dashed] (-2.65,0) ellipse [x radius=1.15,y radius=1.25];
			\node at (-2.65,1.5) {$C_1$};
			
			\node[particle] (p4) at (2.25,0.45) {$4$};
			\node[particle] (p5) at (3.15,-0.45) {$5$};
			\draw[coupling] (p4)--(p5);
			\draw[dashed] (2.7,0) ellipse [x radius=1.05,y radius=1.05];
			\node at (2.7,1.3) {$C_2$};
			
			\node[align=center] at (0,0.05)
			{$\mathcal K_{ij}=0$\\[-1mm]
				$i\in C_1,\ j\in C_2$};
			\node at (0,-1.65)
			{$\psi(\mathbf X)=\psi_{C_1}(\mathbf X_{C_1})
				\psi_{C_2}(\mathbf X_{C_2})$};
		\end{tikzpicture}
		\caption{Schematic regional coupling graph.  Solid edges represent
			cross-particle blocks that may be nonzero within a connected component.  The
			absence of cross edges between $C_1$ and $C_2$ means
			$\mathcal K_{ij}=0$ across the cut throughout $\Omega$, implying cluster
			separability on that region.}
		\label{fig:coupling-graph}
	\end{figure}
	
	Complete one-particle separability is the limiting case in which
	$G_\Omega$ has no edges.  Thus the cross-block structure gives a local
	criterion for both full separability and exact cluster decomposition.
	
	\section{Dynamical generation of dependence and entanglement}
	
	The cross blocks also give a direct description of how nonseparability is
	generated dynamically.  Let the wave function evolve, on a node-free region,
	according to
	\begin{equation}
		i\hbar\,\partial_t\psi
		=
		\left[
		-\sum_{k=1}^N\frac{\hbar^2}{2m_k}\nabla_k^2
		+V(\mathbf X,t)
		\right]\psi,
	\end{equation}
	and write $L=\log\psi$.  Dividing the Schr\"odinger equation by $\psi$ gives
	\begin{equation}
		i\hbar\,\partial_t L
		=
		-\sum_{k=1}^N\frac{\hbar^2}{2m_k}
		\left[
		\nabla_k^2L+\nabla_kL\cdot\nabla_kL
		\right]
		+V.
	\end{equation}
	For a sufficiently smooth solution, taking mixed derivatives yields the exact
	local evolution equation
	\begin{align}
		i\hbar\,\partial_t\mathcal K_{ij}^{ab}
		={}&
		-\sum_{k=1}^N\frac{\hbar^2}{2m_k}
		\partial_{x_i^a}\partial_{x_j^b}
		\left[
		\nabla_k^2L+\nabla_kL\cdot\nabla_kL
		\right]
		\nonumber\\
		&+
		\partial_{x_i^a}\partial_{x_j^b}V,
		\qquad i\neq j.
	\end{align}
	The mixed Hessian of the potential therefore appears as a direct source for
	cross-particle dependence, while the remaining terms describe its subsequent
	kinetic evolution.
	
	\begin{proposition}[Initial generation across a separable cut]
		\label{prop:initial-generation}
		Suppose that at time $t_0$ the state is multiplicatively separable across a
		partition $A|B$ on a node-free product region,
		\begin{equation}
			\psi(\mathbf X_A,\mathbf X_B,t_0)
			=\psi_A(\mathbf X_A,t_0)\psi_B(\mathbf X_B,t_0).
		\end{equation}
		Then, for $i\in A$ and $j\in B$,
		\begin{equation}
			\left.
			\partial_t\mathcal K_{ij}^{ab}
			\right|_{t_0}
			=
			-\frac{i}{\hbar}
			\left.
			\frac{\partial^2V}
			{\partial x_i^a\,\partial x_j^b}
			\right|_{t_0}.
		\end{equation}
	\end{proposition}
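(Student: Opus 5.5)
The plan is to evaluate the exact local evolution equation for $\mathcal K_{ij}^{ab}$ derived just above at $t=t_0$ and show that every kinetic term vanishes there. Only the potential source then survives. At $t_0$ the state is separable on the node-free product region, so by Proposition~\ref{prop:cluster-separability} (or directly) we may choose a consistent branch with
\[
L(\mathbf X,t_0)=L_A(\mathbf X_A)+L_B(\mathbf X_B),\qquad L_A=\log\psi_A,\ L_B=\log\psi_B .
\]
We assume $\psi$ is smooth enough, at least $C^4$ in space and $C^1$ in time near $t_0$, that the mixed derivatives in the evolution equation exist and commute with $\partial_t$. This is the hypothesis "sufficiently smooth solution" already used in deriving that equation.

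Next we fix $i\in A$, $j\in B$ and a particle index $k$, and treat the two kinetic contributions separately.

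For the linear term, $\nabla_k^2 L=\nabla_k^2L_A+\nabla_k^2L_B$. If $k\in A$, the second piece vanishes and the first depends only on $\mathbf X_A$. If $k\in B$, it is the reverse. In either case $\nabla_k^2L$ at $t_0$ is a sum of a function of $\mathbf X_A$ and a function of $\mathbf X_B$, so $\partial_{x_i^a}\partial_{x_j^b}$ annihilates it.

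For the quadratic term, $\nabla_kL$ equals $\nabla_kL_A(\mathbf X_A)$ when $k\in A$ and $\nabla_kL_B(\mathbf X_B)$ when $k\in B$. Hence $\nabla_kL\cdot\nabla_kL$ depends on only one of the two blocks of variables, and its mixed $A|B$ derivative is zero. This is the one step needing care: the cross terms that would arise from products $\partial_{x_i^a}(\nabla_kL)\cdot\partial_{x_j^b}(\nabla_kL)$ vanish because a gradient in particle $k$ cannot depend on both $\mathbf X_A$ and $\mathbf X_B$ simultaneously. Carrying out this product-rule expansion explicitly makes that transparent. Because both kinetic pieces vanish for each $k$, the whole kinetic sum is zero at $t_0$.

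What remains is
\[
i\hbar\,\partial_t\mathcal K_{ij}^{ab}\big|_{t_0}=\partial_{x_i^a}\partial_{x_j^b}V\big|_{t_0}.
\]
Dividing by $i\hbar$ gives the claimed formula. Since $V$ is real, this derivative is purely imaginary, which matches the abstract's statement. We also note that $\mathcal K_{ij}^{ab}(t_0)=0$, so to first order $\mathcal K_{ij}^{ab}(t)\approx -\tfrac{i}{\hbar}(t-t_0)\,\partial^2_{x_i^a x_j^b}V$.

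The main obstacle is justification rather than computation. It lies in ensuring that the same branch of $L$ can be continued in time on the node-free region, and that $\partial_t$ commutes with the spatial mixed derivatives. I would handle this by restricting to a time interval on which $\psi$ stays nonzero on the region, so that $L$ is defined continuously in $t$. The assumed regularity of $\psi$ then justifies the interchange of derivatives.
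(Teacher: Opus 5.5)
Your proposal is correct and follows the paper's proof: you evaluate the exact evolution equation for $\mathcal K_{ij}^{ab}$ at $t_0$, use $L=L_A+L_B$ to show that each $\nabla_k^2L+\nabla_kL\cdot\nabla_kL$ depends on only one side of the cut so its mixed $A|B$ derivative vanishes, and are left with the potential source. Your extra care about regularity and about continuing the branch in time is harmless but not strictly needed, since $\mathcal K$, $\nabla_k L=\nabla_k\psi/\psi$ and $\partial_t L=\partial_t\psi/\psi$ are branch-independent, so local branches suffice.
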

	
	\begin{proof}
		At $t_0$, separability gives
		$L=L_A(\mathbf X_A)+L_B(\mathbf X_B)$.  For each $k$, the quantity
		$\nabla_k^2L+\nabla_kL\cdot\nabla_kL$ therefore depends only on the
		coordinates of the cluster containing $k$.  Its mixed derivative across the
		$A|B$ cut vanishes, leaving only the potential term in the evolution equation
		above.
	\end{proof}
	
	For a real scalar potential, the leading departure from separability is thus
	purely imaginary:
	\begin{equation}
		\mathcal K_{ij}^{ab}(t_0+\delta t)
		=
		-\frac{i\,\delta t}{\hbar}
		\left.
		\frac{\partial^2V}
		{\partial x_i^a\,\partial x_j^b}
		\right|_{t_0}
		+O(\delta t^2).
	\end{equation}
	Hence $\operatorname{Re}\mathcal K_{ij}=O(\delta t^2)$, whereas the phase
	sector generally appears already at first order.  If the mixed Hessian of
	$V$ is nonzero across the cut, the state cannot remain multiplicatively
	separable there for sufficiently small nonzero times.  Conversely, when the
	Hamiltonian itself separates across the cut, all such source terms vanish and
	an initially product state remains a product state.
	For an initially globally separable pure state, loss of multiplicative
	separability on any such region is therefore the onset of entanglement.
	
	As a minimal example, consider two particles in one dimension and an ideal
	interaction pulse generated by $V(x,y)=gxy$, with free evolution neglected
	during the pulse.  Starting from $\psi(x,y,0)=\phi(x)\chi(y)$ gives exactly
	\begin{equation}
		\psi(x,y,t)
		=
		\phi(x)\chi(y)e^{-igtxy/\hbar},
		\qquad
		\mathcal K_{xy}(t)=-\frac{igt}{\hbar}.
	\end{equation}
	Thus the phase-coupled example above is generated dynamically with
	$\lambda=-gt/\hbar$: the position density remains factorized during the ideal
	pulse, while
	\begin{equation}
		\frac{\partial v_x}{\partial y}=-\frac{gt}{m_x},
		\qquad
		\frac{\partial v_y}{\partial x}=-\frac{gt}{m_y}.
	\end{equation}
	For finite-width factors and $gt\neq0$, the resulting pure state is entangled:
	the interaction has generated nonseparability entirely through the phase while
	leaving the instantaneous position density factorized.
	
	\section{Scope}
	
	The field $\mathcal K_{ij}$ is a local diagnostic in a chosen position
	representation, not a basis-independent entanglement monotone.  Pure-state
	separability itself is invariant under local changes of subsystem basis, but
	the pointwise field $\mathcal K_{ij}$, its numerical magnitude, and its regional
	structure are representation dependent.  The zero cross-block condition on a
	node-free product region therefore provides a representation-specific local
	criterion for the underlying product structure.  Nodes require a patchwise
	treatment and may carry
	additional topological information.  Identical particles, spin, mixed states,
	and a systematic analysis of long-time cross-block dynamics require separate
	treatment.  The coupling graph suggests adaptive cluster descriptions
	when the cross-block structure is sparse, but no dimensional reduction is
	claimed here: in particular, small nonzero blocks do not by themselves provide
	a controlled approximation error.

\end{document}